\documentclass[12pt]{article}

\usepackage{amssymb, amsthm}
\usepackage{graphicx}
\usepackage{graphics}
\usepackage{subfig}
\usepackage{epsfig}

\newtheorem{theorem}{Theorem}

\newtheorem{prop}[theorem]{Proposition}

\newtheorem{rem}[theorem]{Remark}

\newtheorem{defn}[theorem]{Definition}

\title{Fractional Periodic Processes: Properties and an Application of Polymer
	Form Factors}

\author{W.~Bock$^1$, J.~L.~da Silva$^2$, L.~Streit$^{2,3}$\\[.3cm]
$^1$ {\small Technische Universit\"at Kaiserslautern,}\\{\small  Fachbereich Mathematik,}\\ {\small Gottlieb-Daimler-Stra{\ss}e 48,}\\ {\small 67663 Kaiserslautern, Germany}\\
{\small E-Mail: bock@mathematik.uni-kl.de}\\[.2cm]
$^2$ {\small CIMA, University of Madeira,} \\{\small Campus da Penteada,} \\{\small 9020-105 Funchal, Portugal.}\\
{\small E-Mail: joses@staff.uma.pt}\\[.2cm]
$^3$ {\small BiBoS, Universit\"at Bielefeld,}\\ {\small Universit\"atsstra{\ss}e 25,}\\{\small  33615 Bielefeld,  Germany}\\
{\small E-Mail: streit@uma.pt}}

\begin{document}

\maketitle

\vspace{10pt}

\begin{abstract}
In this paper we introduce and study three classes of fractional periodic
processes. An application to ring polymers is investigated. We obtain a closed analytic expressions for the form factors, the
Debye functions and their asymptotic decay. The relation between the
end-to-halftime and radius of gyration is computed for these classes
of periodic processes.
\end{abstract}

%
%
%
%
%

\section{Introduction}
Stochastic processes with a periodicity in time have been used e.g.~for
the modelling of stochastic ring structures. In particular in polymer
science ring polymers were based on a periodic random walk.
We shall  consider processes on a half-open interval $[0,L)$, with stationary increments depending only on the geodesic distance along the circle of length $L$, as in eq. (2) below, thus ensuring rotational invariance. It is worth pointing out that the standard Brownian bridge on the interval $[0,L]$ does not have this property. However it is possible to define a Brownian version of these processes as we see below.

In the case of fractional Brownian motion this has been done by Istas
\cite{Istas2005} for the case when the Hurst parameter $H$
is less or equal than the Brownian threshold $H\leq 1/2$. Above this limit
the resulting covariance matrix of the process would no longer be positive semi-definite.  There
exists a recently developed relation between a class of fractional
processes named generalized grey Brownian motion and the class of
fractional Brownian motions which is given by multiplying the latter
with a certain time-independent random variable. We can thus define a periodic generalized grey Brownian motion using
this relation. 

The paper is organized as follows. In Section
2 we define three classes of real-valued periodic processes.
We note that the extension to $\mathbb{R}^{d}$-versions valued of these classes
is straightforward. Section 3 is dedicated to the form factors of
these processes. As mentioned before concrete applications are in long
range coupled polymer models. The Debye function and the form factors
are well known quantities from scattering theory and polymer physics.
They give a deeper insight into the scaling behavior of observables
linked with the underlying processes. We derive, for all three classes,
analytic expressions of the form factors. In the appendix we recall
some special functions used in this paper.

\section{Classes of Periodic Processes}

\label{sec:classes-processes}In this section we introduce the classes
of processes used in this paper. They are \emph{periodic processes} with ``time'' parameter $t$ varying
on the circle $\mathbb{S}_{L}$ of length $L>0$. We parametrize the
points on the circle $\mathbb{S}_{L}$ by their angles $\theta\in[0,2\pi)$. Fixing the length $L$ of the
circle $\mathbb{S}_{L}$, then we may parametrize the points on $\mathbb{S}_{L}$
as $t\in[0,L)$.

We assume given a complete probability space $(\Omega,\mathcal{F},P)$ for any of these processes.

\subsection{Periodic fractional Brownian motion}

A \emph{periodic fractional Brownian motion} (pfBm for short) $B_{p}^{H}$
with Hurst parameter $0<H\le 1/2$ is a centered Gaussian process indexed
by $\mathbb{S}_{L}$ with covariance function $R^{H}$ given, for
any $0\leq s,t<L$, by 
\begin{equation}
R^{H}(t,s;L):=\frac{1}{2}\big(d_{H}(t;L)+d_{H}(s;L)-d_{H}(t-s;L)\big),\label{eq:cov_pfBm}
\end{equation}
where 
\begin{equation}
d_{H}(\tau;L):=\min\left\{ |\tau|^{2H},(L-|\tau|)^{2H}\right\} .\label{eq:distance}
\end{equation}
The existence of these Gaussian processes is based on the positive semi-definiteness of the above covariance matrix, as argued by Istas \cite{Istas2005} and \cite[Chap.~VIII]{Levy48}, note however \cite{BBS19} concerning the restriction to $0<H\leq 0.5$.

\begin{rem}
For any $0\le s,t<L$, the variance of the increment $B_{p}^{H}(t)-B_{p}^{H}(s)$
follows from (\ref{eq:cov_pfBm}) and (\ref{eq:distance}) and we
have 
\begin{equation}
\mathbb{E}\Big(\big(B_{p}^{H}(t)-B_{p}^{H}(s)\big)^{2}\Big)=d_{H}(t-s;L).\label{eq:variance-inc-pfBm}
\end{equation}
In addition, the characteristic function of $B_{p}^{H}(t)-B_{p}^{H}(s)$
is 
\begin{equation}
\mathbb{E}\left(e^{ik(B_{p}^{H}(t)-B_{p}^{H}(s))}\right)=\exp\left(-\frac{k^{2}}{2}d_{H}(t-s;L)\right).\label{eq:cf-inc-pfBm}
\end{equation}
\end{rem}

\begin{prop}
\label{prop:pfBm-sssi}
\begin{enumerate}
\item The pfBm process is $H$-self-similar with stationary increments.
\item The pfBm process has a continuous modification. For any $\gamma\in(0,H)$
this modification is $\gamma$-H{\"o}lder continuous on each finite
interval.
\end{enumerate}
\end{prop}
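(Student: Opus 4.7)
The plan is to handle the two parts separately, both leveraging the fact that $B_p^H$ is centered Gaussian, so all distributional identities reduce to identities of covariance functions.

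For part (1), stationary increments: I would start from the polarization identity coming from (\ref{eq:cov_pfBm}), namely
\[
\mathrm{Cov}\bigl(B_p^H(t)-B_p^H(s),\, B_p^H(u)-B_p^H(v)\bigr) = \tfrac{1}{2}\bigl[-d_H(t-u;L) + d_H(t-v;L) + d_H(s-u;L) - d_H(s-v;L)\bigr],
\]
obtained by expanding the four $R^H$ terms and noting the single-argument contributions $d_H(t;L)$, $d_H(s;L)$, $d_H(u;L)$, $d_H(v;L)$ cancel pairwise. Since the right-hand side depends only on differences of the time arguments (all understood modulo $L$ on $\mathbb{S}_L$), shifting each time by a common $h$ leaves the covariance unchanged, which in the Gaussian category is stationarity of increments. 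For self-similarity, the key is the scaling identity $d_H(a\tau; aL) = a^{2H}\, d_H(\tau;L)$, immediate from $d_H(\tau;L) = \min\{|\tau|^{2H}, (L-|\tau|)^{2H}\}$. This yields $R^H(at, as; aL) = a^{2H}\, R^H(t,s;L)$, identifying the process $t\mapsto B_p^H(at)$ on the circle of length $aL$ with $t\mapsto a^H B_p^H(t)$ on the circle of length $L$ in distribution.

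For part (2), I would invoke the Kolmogorov continuity criterion. Using (\ref{eq:variance-inc-pfBm}), $B_p^H(t)-B_p^H(s)$ is centered Gaussian with variance $d_H(t-s;L)$, and the trivial bound $d_H(t-s;L) \leq |t-s|^{2H}$ gives, for every even integer $p \geq 2$,
\[
\mathbb{E}\bigl(|B_p^H(t)-B_p^H(s)|^p\bigr) = c_p\, d_H(t-s;L)^{p/2} \leq c_p\, |t-s|^{pH}.
\]
For $p > 1/H$ this satisfies Kolmogorov's hypothesis with exponent $pH > 1$, producing a modification that is $\gamma$-H\"older continuous on compact intervals for every $\gamma < H - 1/p$. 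Letting $p\to\infty$ exhausts all $\gamma\in(0,H)$.

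The principal subtlety is the self-similarity claim: because the process is indexed by a circle of fixed length $L$, any genuine time dilation alters the geometry of the index set, so one must formulate the result as an identity between processes on circles of different lengths (or equivalently, as an identity of finite-dimensional distributions on any arc short enough not to feel the periodicity, where $d_H(\tau;L) = |\tau|^{2H}$). I would flag this interpretive choice explicitly before the one-line scaling computation. The H\"older estimate, by contrast, is entirely routine once Gaussianity and the bound $d_H(\tau;L)\leq |\tau|^{2H}$ are in hand.
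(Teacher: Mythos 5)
Your proposal is correct and follows essentially the same route as the paper: the scaling identity $d_H(a\tau;aL)=a^{2H}d_H(\tau;L)$ for self-similarity, reduction of stationarity of increments to the fact that increment (co)variances depend only on time differences, and the Kolmogorov--Chentsov criterion with $p>1/H$ for the continuous, $\gamma$-H\"older modification. The only differences are cosmetic and in your favour: you spell out the full polarization identity for the covariance of two increments (the paper only checks the variance of a single increment, which suffices for Gaussian processes by exactly the polarization argument you give), and you explicitly flag that self-similarity must be read as an identity between processes on circles of lengths $L$ and $aL$, which the paper leaves implicit in the notation $R^H(at,as;aL)$.
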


\begin{proof}
1. The $H$-self-similarity is expressed as the following equality
in finite-dimensional distribution, for any $a>0$ and any $t\in(0,L)$
we have 
\[
B_{p}^{H}(at)=a^{H}B_{p}^{H}(t).
\]
This equality can be translated in terms of the covariance function
$R^{H}$, more precisely if $0\le s,t<L$ it holds 
\[
R^{H}(at,as;aL)=a^{2H}R^{H}(t,s;L).
\]
Hence, we have 
\[
R^{H}(at,as;aL)=\frac{1}{2}\big(d_{H}(at;aL)+d_{H}(as;aL)-d_{H}(at-as;aL)\big)
\]
and it is easy to see from (\ref{eq:distance}) that 
\begin{eqnarray*}
d_{H}(at;aL) & =a^{2H}d_{H}(t;L),\\
d_{H}(as;aL) & =a^{2H}d_{H}(s;L),\\
d_{H}(at-as;aL), & =a^{2H}d_{H}(t-s;L).
\end{eqnarray*}
Then the $H$-self-similarity of $B_{p}^{H}$ follows easily. To prove
the stationarity of the Gaussian process $B_{p}^{H}$ it is sufficient
to show that 
\begin{equation}
\mathbb{E}\left((B_{p}^{H}(t)-B_{p}^{H}(s))^{2}\right)=\mathbb{E}\left((B_{p}^{H}(t-s))^{2}\right).\label{eq:pfBm_stationary}
\end{equation}
Equality (\ref{eq:pfBm_stationary}) is a consequence of (\ref{eq:variance-inc-pfBm})
and the definition of $d_{H}$, more precisely we have 
\[
\mathbb{E}\left((B_{p}^{H}(t)-B_{p}^{H}(s))^{2}\right)=d_{H}(t-s;L)=\mathbb{E}\left((B_{p}^{H}(t-s))^{2}\right).
\]
2. Since $B_{p}^{H}(t)-B_{p}^{H}(s)$ is a centered Gaussian random
variable with variance $d_{H}(|t-s|;L)$ we have 
\[
\mathbb{E}(|B_{p}^{H}(t)-B_{p}^{H}(s)|^{p})=(d_{H}(|t-s|;L))^{p/2}\le|t-s|^{pH}.
\]
Thus, if we take $p>\frac{1}{H}$ we obtain the existence of a continuous
modification via the Kolmogorov-Chentsov continuity theorem. For the H{\"o}lder
exponent one obtain $\gamma\in(0,H-\frac{1}{p})$.
\end{proof}
From here on we work with the continuous modification of $B_{p}^{H}$
preserving the same notation.

\subsection{Periodic Grey Brownian Motion}

\label{subsec:pgBm}Grey Brownian motion was introduced by W. Schneider
in \cite{Schneider90,MR1190506} in order to study the fractional
Feynman-Kac formula. Here we are interested in the periodic version
of this process which is represented in terms of the pfBm process
$B_{p}^{H}$ as 
\[
X_{p}^{H}(t):=\sqrt{Y_{2H}}B_{p}^{H}(t),\qquad0\le t<L,\;0<H\le\frac{1}{2},
\]
where $Y_{2H}$ is the positive random variable, independent of $B_{p}^{H}$,
with density given via the $M$-Wright function $M_{2H}$, see Appendix\ \ref{sec:appendix}-(\ref{eq:M-Wright-function}).
We call this process \emph{periodic grey Brownian motion} (pgBm for
short), see Remark\ \ref{rem:pgBm} for more details.

It is easy to compute the characteristic function of the increment
$X_{p}^{H}(t)-X_{p}^{H}(s)$, $0\le s,t<L$, namely, 
\[
\mathbb{E}\left(e^{ik(X_{p}^{H}(t)-X_{p}^{H}(s))}\right)=\int_{0}^{\infty}M_{2H}(\tau)\mathbb{E}\left(e^{ik\sqrt{\tau}(B_{p}^{H}(t)-B_{p}^{H}(s))}\right)\,d\tau
\]
and using equality (\ref{eq:cf-inc-pfBm}) we obtain 
\[
\mathbb{E}\left(e^{ik(X_{p}^{H}(t)-X_{p}^{H}(s))}\right)=\int_{0}^{\infty}M_{2H}(\tau)e^{-\frac{k^{2}}{2}\tau d_{H}(t-s;L)}\,d\tau.
\]
Finally, using the Laplace transform of the density $M_{2H}$, see
(\ref{eq:LaplaceT_MWf}), we arrive at 
\begin{equation}
\mathbb{E}\left(e^{ik(X_{p}^{H}(t)-X_{p}^{H}(s))}\right)=E_{2H}\left(-\frac{k^{2}}{2}d_{H}(t-s;L)\right).\label{eq:cf-inc-pgBm}
\end{equation}
Here $E_{2H}$ is the Mittag-Leffler function defined in (\ref{eq:MLf}).
It follows from (\ref{eq:cf-inc-pgBm}) that 
\begin{equation}
\mathbb{E}\big((X_{p}^{H}(t)-X_{p}^{H}(s))^{2}\big)=\frac{d_{H}(t-s;L)}{\Gamma(2H+1)}.\label{eq:2nd-moment-inc-pgBm}
\end{equation}

\begin{rem}
\label{rem:pgBm}For $\beta=2H$ the process $X_{p}^{\frac{\beta}{2}}$
has characteristic function, for any $\lambda\in\mathbb{R}^{n}$ and
$0\le t_{1}<t_{2}<\ldots<t_{n}<L$ 
\[
\mathbb{E}\left(\exp\left(i\sum_{k=1}^{n}\lambda_{k}X_{p}^{\frac{\beta}{2}}(t_{k})\right)\right)=E_{\beta}\left(-\frac{1}{2}(\lambda,\Sigma\lambda)\right),
\]
where $\Sigma=(a_{kj})_{1\le k,j\le n}$, $a_{kj}=\mathbb{E}\big(X_{p}^{\frac{\beta}{2}}(t_{k})X_{p}^{\frac{\beta}{2}}(t_{j})\big)$
is the covariance matrix and $(\cdot,\cdot)$ denotes the inner product
in $\mathbb{R}^{n}$. When the parameter $t$ is interpreted as time,
that is $t\in\mathbb{R}_{+}$ or a subset $I\subset\mathbb{R}_{+}$,
the process $\sqrt{Y_{\beta}}B^{\frac{\beta}{2}}(t)$, $t\in\mathbb{R}_{+}$
is the one of W.\ Schneider \cite{MR1190506,Schneider90}.
A systematic study of this class of processes and its generalization
was realized by F.\ Mainardi and his collaborators, see \cite{Mura_mainardi_09}
and references therein.
\end{rem}

\begin{rem}
Actually we could start by giving the characteristic functional as
in (\ref{eq:cf-inc-pgBm}) and show the conditions of Minlos-Sazonov's
theorem. This approach leads to the Mittag-Leffler analysis (see \cite{GJRS14})
where the law of the process is a probability measure on a space of
generalized functions.
\end{rem}

\begin{prop}
\label{prop:pgBm-sssi}The process $X_{p}^{H}$ is $H$-self-similar
with stationary increments and continuous paths.
\end{prop}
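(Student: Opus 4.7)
The plan is to derive all three properties from the product representation $X_p^H=\sqrt{Y_{2H}}\,B_p^H$, Proposition~\ref{prop:pfBm-sssi}, and the Laplace transform of $M_{2H}$ given in (\ref{eq:LaplaceT_MWf}), throughout conditioning on the independent factor $Y_{2H}$.

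For $H$-self-similarity, I would check equality of finite-dimensional distributions via characteristic functionals. Given $Y_{2H}=\tau$, the vector $(X_p^H(at_1),\ldots,X_p^H(at_n))$ is centered Gaussian with covariance entries $\tau\,R^H(at_k,at_j;aL)$, so its conditional characteristic functional is $\exp\!\bigl(-\tfrac{\tau}{2}(\lambda,\Sigma_{a}\lambda)\bigr)$ with $\Sigma_a=(R^H(at_k,at_j;aL))_{k,j}$. Applying the covariance scaling $R^H(at,as;aL)=a^{2H}R^H(t,s;L)$ established in Proposition~\ref{prop:pfBm-sssi}.1 and then integrating against $M_{2H}(\tau)$ via (\ref{eq:LaplaceT_MWf}), I expect to obtain the characteristic functional of $(a^H X_p^H(t_1),\ldots,a^H X_p^H(t_n))$ (with circle length $L$), which is exactly the self-similarity relation in the sense used for the pfBm case.

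For stationary increments, I would first observe that $B_p^H$ is centered Gaussian with $B_p^H(0)=0$ (since $d_H(0;L)=0$), so that $\mathrm{Cov}\bigl(B_p^H(t_i+h)-B_p^H(s_i+h),\,B_p^H(t_j+h)-B_p^H(s_j+h)\bigr)$ expands into a linear combination of $d_H(\cdot;L)$ evaluated at the pairwise differences $t_i-t_j$, $t_i-s_j$, $s_i-t_j$, $s_i-s_j$ on $\mathbb{S}_L$, with every $h$-dependent term cancelling. Hence the joint law of increments of $B_p^H$ is shift-invariant; multiplying by the independent nonnegative factor $\sqrt{Y_{2H}}$ (which commutes with time shifts) transfers this joint stationarity to $X_p^H$.

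For path continuity, I would select the continuous modification of $B_p^H$ provided by Proposition~\ref{prop:pfBm-sssi}.2 and set $X_p^H(\omega,t):=\sqrt{Y_{2H}(\omega)}\,B_p^H(\omega,t)$; on the full-measure event where $\sqrt{Y_{2H}(\omega)}<\infty$ and $t\mapsto B_p^H(\omega,t)$ is continuous, the paths of $X_p^H$ are continuous as well. The main obstacle I anticipate is bookkeeping rather than analytic: one must carefully track the simultaneous rescaling $L\to aL$ in the self-similarity step and the periodic (modulo $L$) interpretation of the time shifts in the stationary-increments step. No ingredients beyond Proposition~\ref{prop:pfBm-sssi}, independence of $Y_{2H}$ and $B_p^H$, and the Laplace identity (\ref{eq:LaplaceT_MWf}) appear to be needed.
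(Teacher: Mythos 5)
Your proposal is correct, and it is essentially the paper's approach: the paper's own proof is a one-line deferral to Proposition~\ref{prop:pfBm-sssi}, and your conditioning on the independent factor $Y_{2H}$ (reducing each claim to the corresponding pfBm property, with the Laplace identity (\ref{eq:LaplaceT_MWf}) recovering the Mittag-Leffler characteristic functional) is exactly how that deferral is meant to be carried out. Your treatment of stationary increments via the full increment covariance is in fact more careful than the paper's pfBm argument, which only checks the variance of a single increment.
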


\begin{proof}
The proof follows as in Proposition\ \ref{prop:pfBm-sssi}.
\end{proof}

\subsection{Periodic Generalized Grey Brownian Motion}

\label{subsec:pggBm}We finally introduce the most general class of
periodic processes used in this paper. More precisely, let $X_{p}^{\beta,H}$
denote the process defined by 
\[
X_{p}^{\beta,H}(t):=\sqrt{Y_{\beta}}B_{p}^{H}(t),\quad t\ge0,\;0<\beta\le1,\;0<H\le\frac{1}{2},
\]
where $B_{p}^{H}$ is the pfBm and $Y_{\beta}$ is the positive random
variable, independent of $B_{p}^{H}$, with density $M_{\beta}$.
We call this process \emph{periodic generalized grey Brownian motion}.
The characteristic function of the increment $X_{p}^{\beta,H}(t)-X_{p}^{\beta,H}(s)$,
$0\le s,t<L$ may be computed as 
\[
\mathbb{E}\left(e^{ik(X_{p}^{\beta,H}(t)-X_{p}^{\beta,H}(s))}\right)=\int_{0}^{\infty}M_{\beta}(\tau)\mathbb{E}\left(e^{ik\sqrt{\tau}(B_{p}^{H}(t)-B_{p}^{H}(s))}\right)\,d\tau
\]
and using equality (\ref{eq:cf-inc-pfBm}) we obtain 
\[
\mathbb{E}\left(e^{ik(X_{p}^{\beta,H}(t)-X_{p}^{\beta,H}(s))}\right)=\int_{0}^{\infty}M_{\beta}(\tau)e^{-\frac{k^{2}}{2}\tau d_{H}(t-s;L)}\,d\tau.
\]
Finally, using the Laplace transform of the density $M_{\beta}$,
see (\ref{eq:LaplaceT_MWf}), we obtain 
\begin{equation}
\mathbb{E}\left(e^{ik(X_{p}^{\beta,H}(t)-X_{p}^{\beta,H}(s))}\right)=E_{\beta}\left(-\frac{k^{2}}{2}d_{H}(t-s;L)\right).\label{eq:cf-inc-pggBm}
\end{equation}
It follows from (\ref{eq:cf-inc-pggBm}) that 
\begin{equation}
\mathbb{E}\big((X_{p}^{\beta,H}(t)-X_{p}^{\beta,H}(s))^{2}\big)=\frac{d_{H}(t-s;L)}{\Gamma(\beta+1)}.\label{eq:2nd-moment-inc-pggBm}
\end{equation}

\begin{prop}
The process $X_{p}^{\beta,H}$ is $H$-self-similar with stationary
increments and continuous paths.
\end{prop}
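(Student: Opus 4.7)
The strategy is to leverage the product representation $X_p^{\beta,H}(t)=\sqrt{Y_\beta}\,B_p^H(t)$ with $Y_\beta$ independent of $B_p^H$, and to transfer each of the three properties from the pfBm, whose analogous properties were established in Proposition~\ref{prop:pfBm-sssi}, by conditioning on the positive mixing factor $Y_\beta$.

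For the $H$-self-similarity, I would compute the joint characteristic function of $\bigl(X_p^{\beta,H}(at_k)\bigr)_{k=1}^n$ on the rescaled circle of length $aL$. Conditioning on $Y_\beta=\tau$ produces a Gaussian characteristic function $\exp\bigl(-\tfrac{\tau}{2}(\lambda,\Sigma_a\lambda)\bigr)$ built from the covariance matrix $\Sigma_a$ of $\bigl(B_p^H(at_k)\bigr)$. By the scaling identity $R^H(at,as;aL)=a^{2H}R^H(t,s;L)$ proved in Proposition~\ref{prop:pfBm-sssi}, one has $\Sigma_a=a^{2H}\Sigma$ with $\Sigma$ the covariance of $\bigl(B_p^H(t_k)\bigr)$. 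Integrating against $M_\beta(\tau)\,d\tau$ and invoking the Laplace transform (\ref{eq:LaplaceT_MWf}) then gives $E_\beta\bigl(-\tfrac{1}{2}a^{2H}(\lambda,\Sigma\lambda)\bigr)$, which is precisely the characteristic function of $\bigl(a^H X_p^{\beta,H}(t_k)\bigr)$.

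Stationarity of increments is the multidimensional counterpart of (\ref{eq:cf-inc-pggBm}): for any $0\le t_1<\cdots<t_n<L$ and shift $s$, the same conditioning argument expresses the characteristic function of the shifted increments $\bigl(X_p^{\beta,H}(t_k+s)-X_p^{\beta,H}(s)\bigr)_k$ as $E_\beta\bigl(-\tfrac{1}{2}(\lambda,\tilde\Sigma\lambda)\bigr)$, where $\tilde\Sigma$ is the covariance matrix of the pfBm increments. Since pfBm already has stationary increments on $\mathbb{S}_L$, the entries of $\tilde\Sigma$ are independent of $s$, yielding the claim.

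Continuity of paths requires no extra work: fixing the continuous modification of $B_p^H$ from Proposition~\ref{prop:pfBm-sssi} and using that $Y_\beta$ is $P$-a.s. finite, the map $t\mapsto\sqrt{Y_\beta(\omega)}\,B_p^H(t,\omega)$ is continuous for $P$-a.e.\ $\omega$ as a scalar multiple of a continuous function; independence of $Y_\beta$ and the modification is preserved because the modification is constructed pathwise. I do not expect a genuine obstacle; the only mild bookkeeping point is extending self-similarity and stationarity from the one-dimensional formula (\ref{eq:cf-inc-pggBm}) to the full finite-dimensional distributions, but this is forced by the conditional Gaussian structure.
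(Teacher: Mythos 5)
Your proposal is correct and is essentially the argument the paper intends: the paper's own proof is just the one-line remark that it proceeds as in Proposition~\ref{prop:pfBm-sssi}, and the natural way to carry that out is exactly what you do --- condition on the independent factor $Y_{\beta}$, use the Laplace transform (\ref{eq:LaplaceT_MWf}) to reduce the finite-dimensional characteristic functions to $E_{\beta}$ of the pfBm covariance form, and transfer self-similarity, stationarity of increments, and path continuity from $B_{p}^{H}$. You in fact supply more detail than the paper does, in particular the upgrade from the one-dimensional identity (\ref{eq:cf-inc-pggBm}) to full finite-dimensional distributions.
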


\begin{proof}
The proof is similar to that of Proposition\ \ref{prop:pfBm-sssi}.
\end{proof}

\section{Form Factors for Periodic Processes}

\label{sec:Form_Factors} In this section we explore the form factors and the corresponding Debye functions for the classes of periodic processes introduced above.
Explicit analytic expressions are computed for all three classes
of periodic processes. The relation between the radius of gyration
and end-to-halftime length is also shown.

\subsection{Form Factors for Periodic Fractional Brownian Motion}

To begin with we note that, given a $d$-dimensional stochastic process
$X$, the form factor associated to $X$ is the function defined by
\begin{equation}
S^{X}(k):=\frac{1}{n^{2}}\int_{0}^{n}dt\int_{0}^{n}ds\,\emph{E}\big(e^{i(k,X(t)-X(s))}\big),\quad k\in\mathbb{R}^{d},\;n\in\mathbb{N}\label{eq:form_factor_general}, see  e.g.~\cite{Hammouda}
\end{equation}
which, in case $X$ is $\nu$-self-similar, simplifies to 
\begin{equation}
S^{X}(k)=\int_{0}^{1}dt\int_{0}^{1}ds\,\emph{E}\big(e^{in^{\nu}(k,X(t)-X(s))}\big).\label{eq:form_factor_general1}
\end{equation}
This function encodes in particular, to lowest order in $k$, the
\emph{root-mean-square radius of gyration} (or simply \emph{radius
of gyration}) of $X$, defined by 
\[
\left(R_{g}^{X}\right)^{2}:=\frac{1}{2}\frac{1}{n^{2}}\int_{0}^{n}dt\int_{0}^{n}ds\,\emph{E}\left(\big|X(t)-X(s)\big|^{2}\right)
\]
which plays an important role in the study of random path conformations.

Hence, for the class of pfBm, denoting the form factor by $S^{\mathrm{pfBm}}$,
for any $k\in\mathbb{R}$, we have 
\begin{eqnarray*}
S^{\mathrm{pfBm}}(k) & :=\frac{2}{L^{2}}\int_{0}^{L}\int_{0}^{t}\mathbb{E}\left(e^{ik(B_{p}^{H}(t)-B_{p}^{H}(s))}\right)\,ds\,dt\\
 & =\frac{2}{L^{2}}\int_{0}^{L}\int_{0}^{t}e^{-\frac{1}{2}k^{2}d_{H}(t-s;L)}\,ds\,dt.
\end{eqnarray*}
Making the
change of variable $\tau=t-s$ we obtain 
\begin{eqnarray*}
S^{\mathrm{pfBm}}(k) & =\frac{2}{L^{2}}\int_{0}^{L}(L-\tau)e^{-\frac{1}{2}k^{2}d_{H}(\tau;L)}\,d\tau\\
 & =\frac{2}{L^{2}}\left(\int_{0}^{L/2}(L-\tau)e^{-\frac{1}{2}k^{2}\tau^{2H}}\,d\tau+\int_{L/2}^{L}(L-\tau)e^{-\frac{1}{2}k^{2}(L-\tau)^{2H}}\,d\tau\right)\\
 & =\frac{2}{L}\int_{0}^{L/2}e^{-\frac{1}{2}k^{2}\tau^{2H}}\,d\tau\\
 & =\frac{(\frac{1}{2}k^{2})^{-\frac{1}{2H}}}{LH}\left[\Gamma\left(\frac{1}{2H}\right)-\Gamma\left(\frac{1}{2H},\frac{1}{2}k^{2}\left(\frac{L}{2}\right)^{2H}\right)\right],
\end{eqnarray*}
where $\Gamma(a,z)=\int_{z}^{\infty}t^{a-1}e^{-t}\,dt$ is the upper
incomplete gamma function, see \cite{GR81}.

Denoting $y^{2}:=\frac{k^{2}}{2}\left(\frac{L}{2}\right)^{2H}$ the
Debye function for pfBm has the explicit expression 
\[
S^{\mathrm{pfBm}}(k)=f^{\mathrm{pfBm}}(y;H):=\frac{1}{2Hy^{\frac{1}{H}}}\left[\Gamma\left(\frac{1}{2H}\right)-\Gamma\left(\frac{1}{2H},y^{2}\right)\right].
\]
In Figure\ \ref{fig:forms_factors_pfBm} we plot the Debye function
$f^{\mathrm{pfBm}}$ corresponding to the pfBm $B_{p}^{H}$ for different
Hurst parameters $H$. The asymptotic of the Debye function $f^{\mathrm{pfBm}}$
is given by 
\[
f^{\mathrm{pfBm}}(y;H)\sim\Gamma\left(\frac{1}{2H}\right)\frac{y^{-1/H}}{2H},\quad y\to\infty
\]
and for $H=\frac{1}{2}$, i.e.~in the case of periodic Brownian motion, $f^{\mathrm{pfBm}}$ decays as $y^{-2}$.
In general $f^{\mathrm{pfBm}}$ decays as $y^{-\frac{1}{H}}$, see
Figure \ref{fig:forms_factors_pfBm}-\subref{fig:LogLog-scale}
where the lines are getting steeper for smaller $H$.

\begin{figure}
\begin{centering}
\subfloat[\label{fig:Linear-scale}Linear scale.]{\begin{centering}
\includegraphics[scale=0.43]{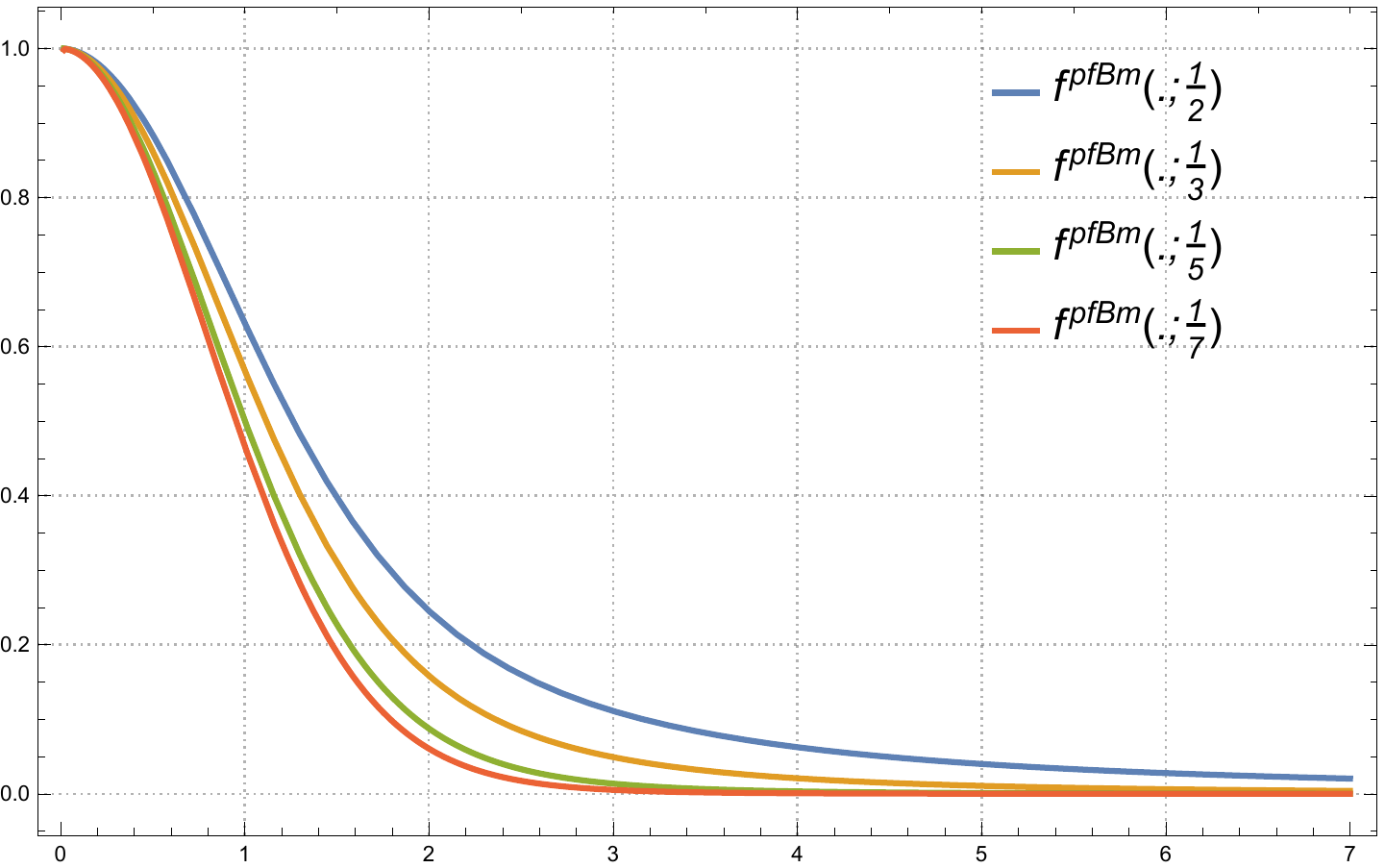}
\par\end{centering}
}\hfill{}\subfloat[\label{fig:LogLog-scale}LogLog scale.]{\begin{centering}
\includegraphics[scale=0.43]{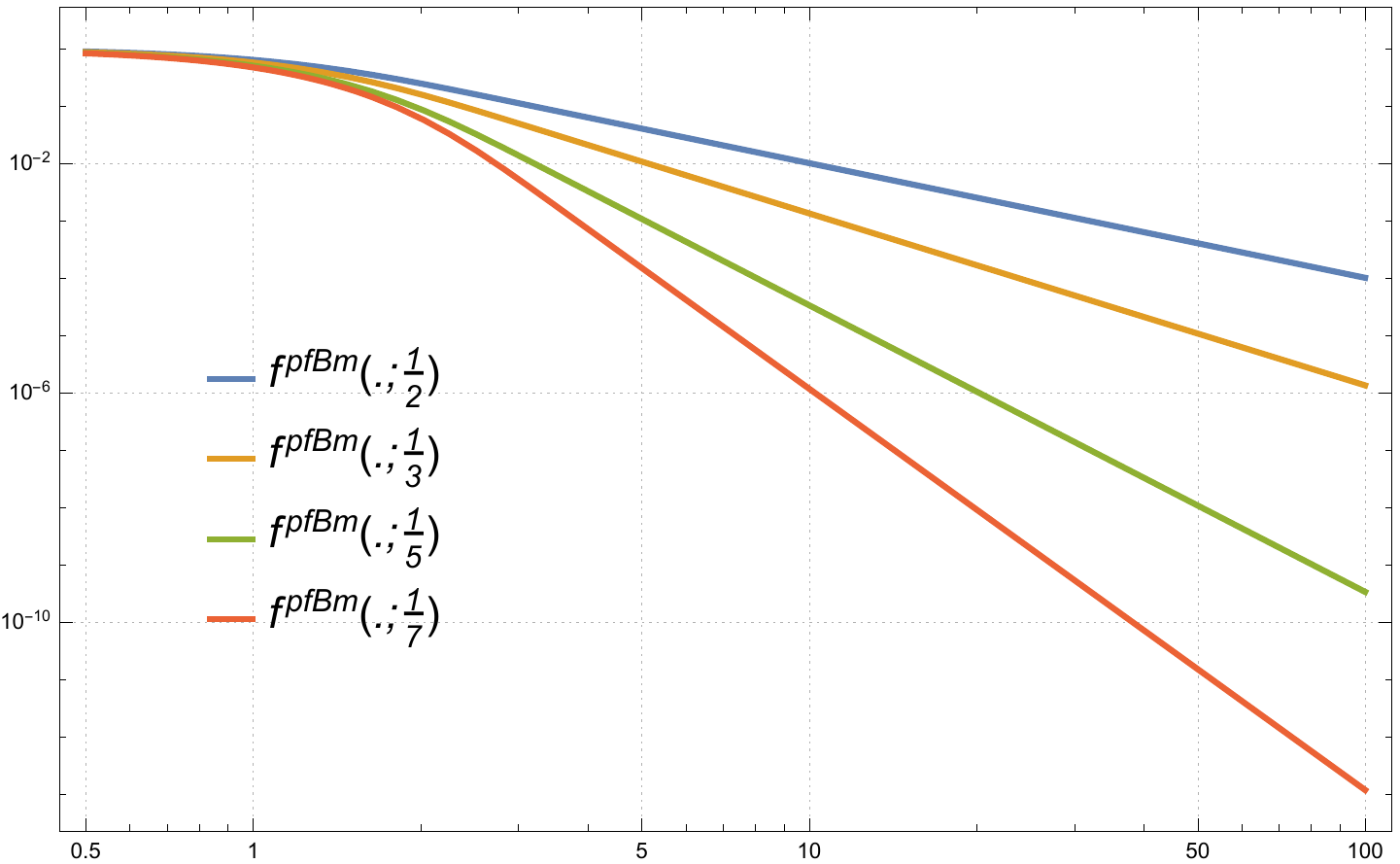}
\par\end{centering}
}
\par\end{centering}
\caption{\label{fig:forms_factors_pfBm}Debye function for the pfBm process
$B_{p}^{H}$ for $H=\frac{1}{2},\frac{1}{3},\frac{1}{5},\frac{1}{7}$.}
\end{figure}

For the the illustration of the dependence on large $y$, Figure \ref{fig:Kratky_fbm} shows the Kratky plot for pfBm. Here the asymptotical behavior is very well visible. 

\begin{figure}
\begin{centering}
\includegraphics[width=.5\textwidth]{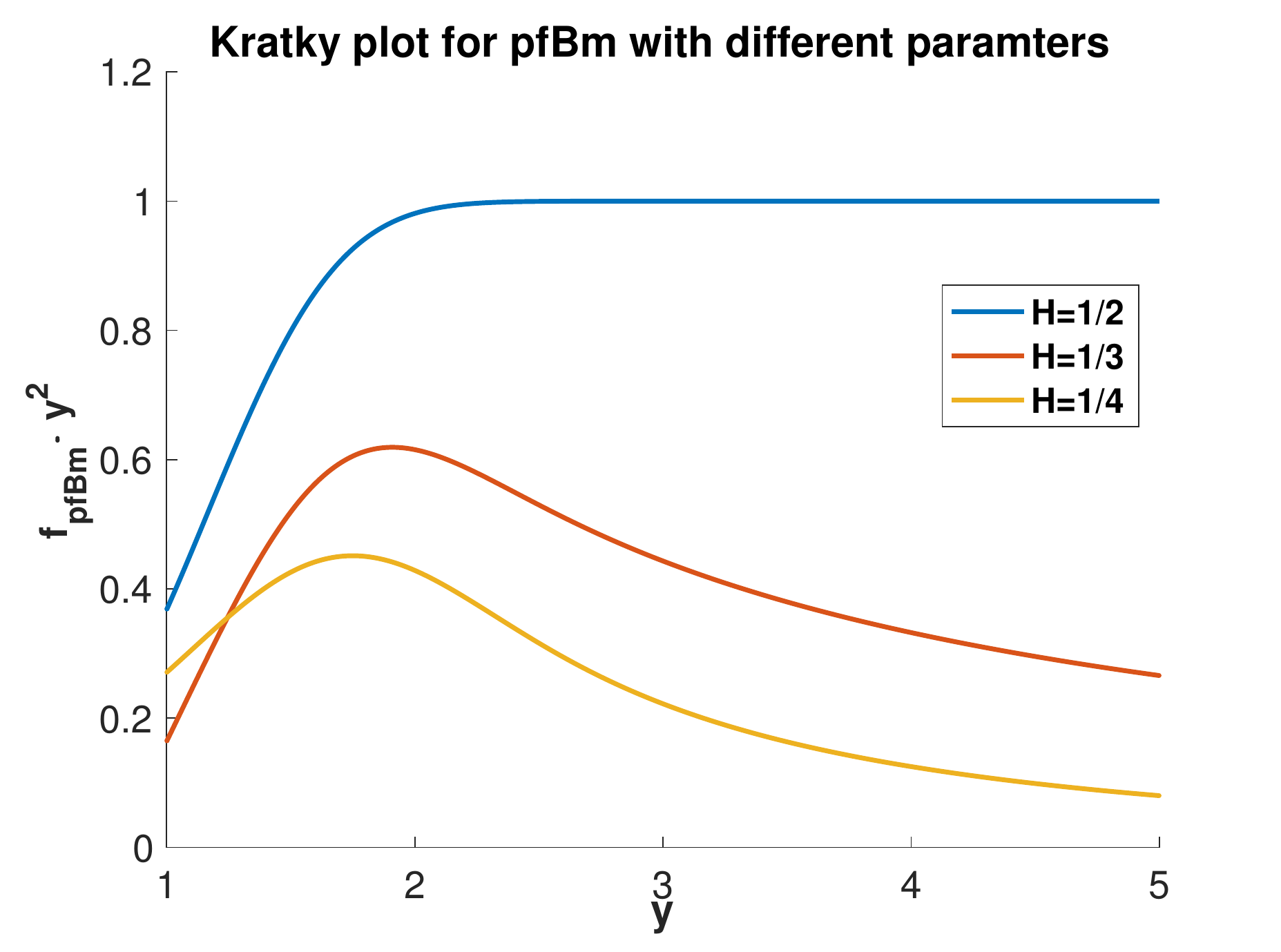}
\caption{\label{fig:Kratky_fbm}Kratky plot for the pfBm process
$X_{p}^{H}$ for $H=\frac{1}{2},\frac{1}{3},\frac{1}{4},\frac{1}{5}$.}
\end{centering}
\end{figure}

The radius of gyration for pfBm is obtained by expanding the form
factor $S^{\mathrm{pfBm}}$ to lowest order. We obtain 
\[
\left(R_{g}^{\mathrm{pfBm}}(L)\right)^{2}\ =\frac{L^{2H}}{(2H+1)2^{2H+1}}
\]
to be compared with the linear case with time parameter $t\in[0,l]$
\[
\left(R_{g}^{\mathrm{fBm}}(l)\right)^{2}=\frac{l^{2H}}{(2H+1)(2H+2)}.
\]
Note that 
\[
\left(R_{g}^{\mathrm{fBm}}\left(\frac{L}{2}\right)\right)^{2}=\frac{\left(R_{g}^{\mathrm{pfBm}}(L)\right)^{2}}{H+1}.
\]
Computing the end-to-halftime length with time parameter $t\in[0,\frac{L}{2}]$
gives 
\[
\left(R_{e}^{\mathrm{pfBm}}\left(\frac{L}{2}\right)\right)^{2}=\mathbb{E}\left(\left(B_{p}^{H}\left(\frac{L}{2}\right)\right)^{2}\right)=d_{H}\left(\frac{L}{2};L\right)=\left(\frac{L}{2}\right)^{2H}
\]
which implies the following relation 
\[
\frac{\left(R_{e}^{\mathrm{pfBm}}\left(\frac{L}{2}\right)\right)^{2}}{2(2H+1)}=\left(R_{g}^{\mathrm{pfBm}}(L)\right)^{2}.
\]

\subsection{Form Factors for Periodic Grey Brownian Motion}

The form factor of the pgBm process $X_{p}^{H}$ introduced in Subsection\ \ref{subsec:pgBm}
is given for any $k\in\mathbb{R}$ by 
\begin{eqnarray*}
S^{\mathrm{pgBm}}(k): & =\frac{2}{L^{2}}\int_{0}^{L}\int_{0}^{t}\mathbb{E}\left(e^{ik\sqrt{Y_{2H}}(B_{p}^{H}(t)-B_{p}^{H}(s))}\right)\,dsdt\\
 & =\frac{2}{L^{2}}\int_{0}^{L}\int_{0}^{t}\int_{0}^{\infty}M_{2H}(\tau)\mathbb{E}\left(e^{-\frac{1}{2}k^{2}\sqrt{\tau}(B_{p}^{H}(t)-B_{p}^{H}(s))}\right)\,d\tau\,ds\,dt.\\
 & =\frac{2}{L^{2}}\int_{0}^{L}\int_{0}^{t}\int_{0}^{\infty}M_{2H}(\tau)e^{-\frac{1}{2}k^{2}\tau d_{H}(t-s;L)}\,d\tau\,ds\,dt.
\end{eqnarray*}
Applying the Fubini theorem and making the change of variables $r=t-s$,
yields 
\begin{eqnarray*}
S^{\mathrm{pgBm}}(k) & =\frac{2}{L^{2}}\int_{0}^{\infty}M_{2H}(\tau)\int_{0}^{L}\int_{0}^{t}e^{-\frac{1}{2}k^{2}\tau d_{H}(t-s;L)}\,d\tau\,ds\,dt\\
 & =\frac{2}{L^{2}}\int_{0}^{\infty}M_{2H}(\tau)\int_{0}^{L}(L-r)e^{-\frac{1}{2}k^{2}\tau d_{H}(r;L)}\,dr\,d\tau.
\end{eqnarray*}
Once more Fubini's theorem gives 
\begin{eqnarray*}
S^{\mathrm{pgBm}}(k) & =\frac{2}{L^{2}}\int_{0}^{L}(L-r)\int_{0}^{\infty}M_{2H}(\tau)e^{-\frac{1}{2}k^{2}\tau d_{H}(r)}\,d\tau\,dr\\
 & =\frac{2}{L^{2}}\int_{0}^{L}(L-r)E_{2H}\left(-\frac{k^{2}}{2}d_{H}(r;L)\right)\,dr\\
 & =\int_{0}^{1}E_{2H}\left(-\frac{k^{2}}{2}\left(\frac{L}{2}\right)^{2H}r^{2H}\right)\,dr\\
 & =E_{2H,2}\left(-\frac{k^{2}}{2}\left(\frac{L}{2}\right)^{2H}\right).
\end{eqnarray*}
In the last equality we have used formula (\ref{eq:Euler_transform-MLf1})
with $\rho=\sigma=\alpha=1$, $\gamma=2H$ and $E_{\beta,\alpha}$
is the generalized Mittag-Leffler, see (\ref{eq:gMLf}). Defining
$y^{2}:=\frac{k^{2}}{2}\left(\frac{L}{2}\right)^{2H}$ it follows
that the Debye function associated to the pgBm process $X_{p}^{H}$
is explicitly given by 
\[
f^{\mathrm{pgBm}}(y;H)=E_{2H,2}\left(-y^{2}\right).
\]
The asymptotic of $f^{\mathrm{pgBm}}(\cdot;H)$ follows from Proposition\ \ref{prop:asymp-gMLf}-\ref{asym-gMLf-two}
such that 
\begin{equation}
f^{\mathrm{pgBm}}(y;H)\sim\frac{1}{\Gamma(2H+2)}\frac{1}{y^{2}},\quad y\to\infty.\label{eq:Debye-pgBm-asym}
\end{equation}
In Figure\ \ref{fig:forms_factors_Streit} we show the Debye function
of the pgBm process $X_{p}^{H}$ for different values of the parameter
$H$, in linear scale Figure \ref{fig:forms_factors_Streit}-\subref{fig:Linear-scale-1}
and LogLog scale Figure \ref{fig:forms_factors_Streit}-\subref{fig:LogLog-scale-1}.
The asymptotic of the Debye function $f^{\mathrm{pgBm}}(\cdot;H)$
given in (\ref{eq:Debye-pgBm-asym}) is reflected in Figure \ref{fig:forms_factors_Streit}-\subref{fig:LogLog-scale-1}
where the slope of the lines for $y$ big being the same. In other
words, the lines are parallel.

\begin{figure}
\begin{centering}
\subfloat[\label{fig:Linear-scale-1}Linear scale.]{\begin{centering}
\includegraphics[scale=0.43]{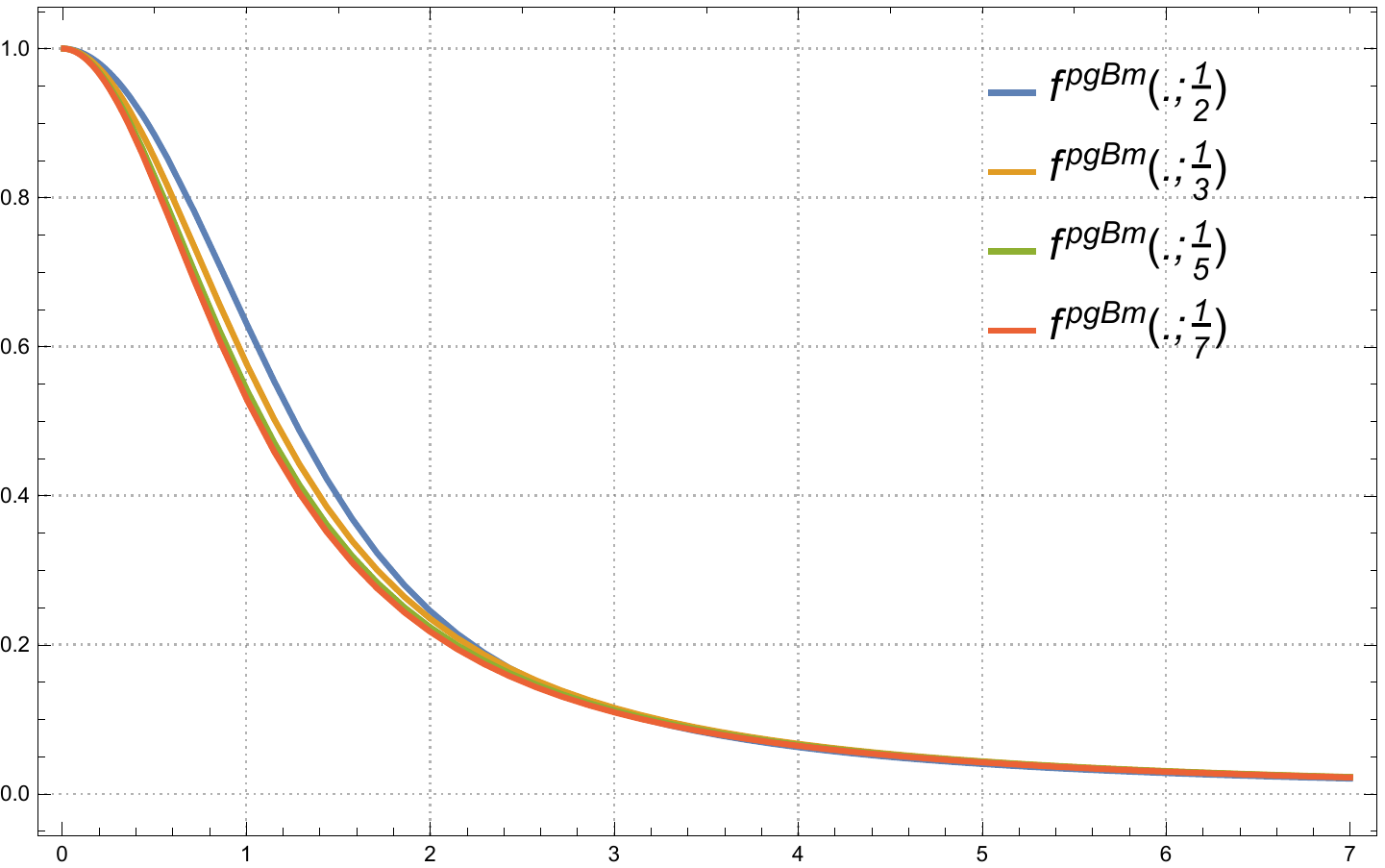}
\par\end{centering}
}\hfill{}\subfloat[\label{fig:LogLog-scale-1}LogLog scale.]{\begin{centering}
\includegraphics[scale=0.43]{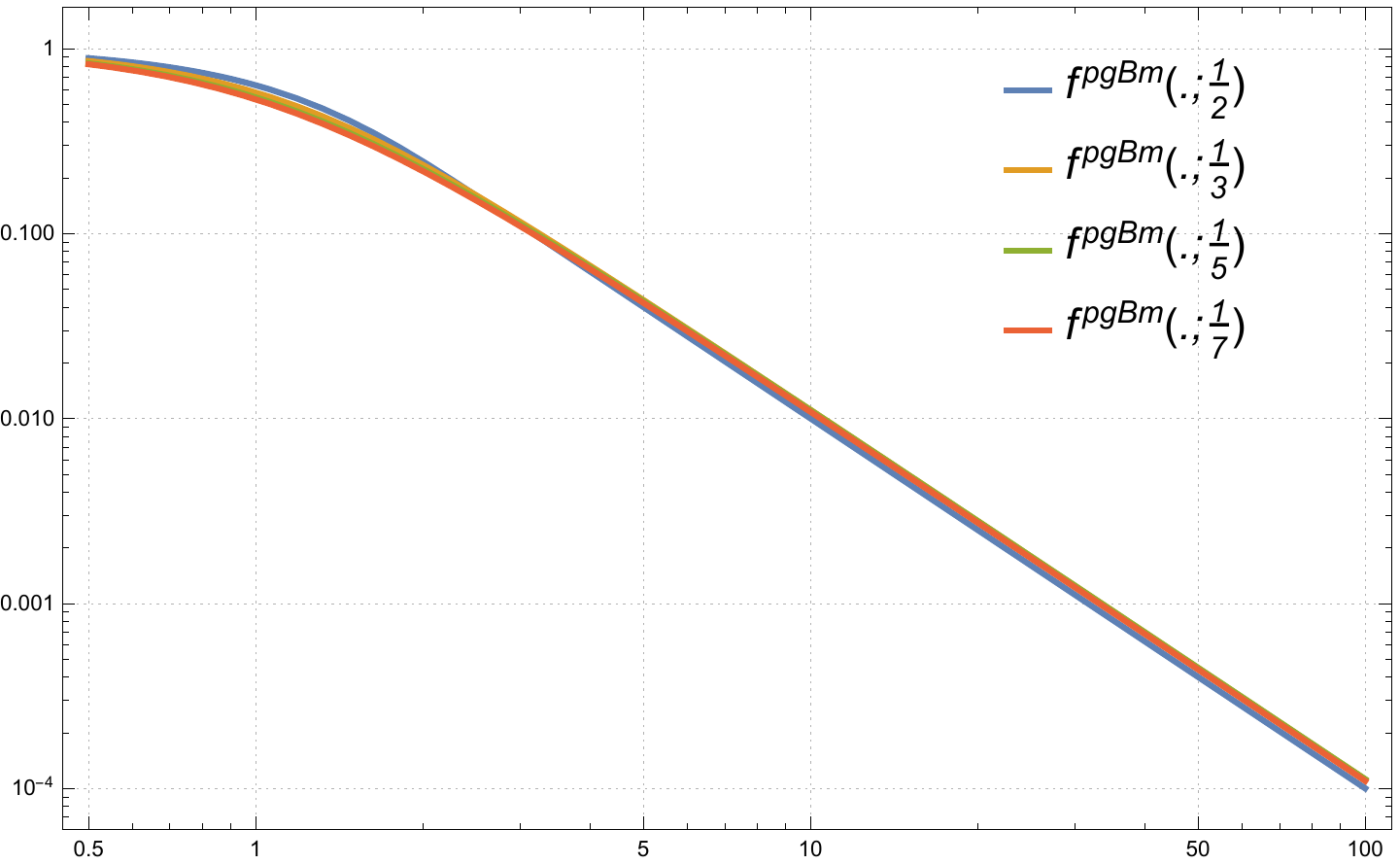}
\par\end{centering}
}
\par\end{centering}
\caption{\label{fig:forms_factors_Streit}Debye functions for the pgBm process
$X_{p}^{H}$ for $H=\frac{1}{2},\frac{1}{3},\frac{1}{5},\frac{1}{7}$.}
\end{figure}

For the the illustration of the dependence on large $y$, Figure \ref{fig:Kratky_gbm} shows the Kratky plot for pgBm. Here the asymptotical behavior is very well visible. 

\begin{figure}
\begin{centering}
\includegraphics[width=.5\textwidth]{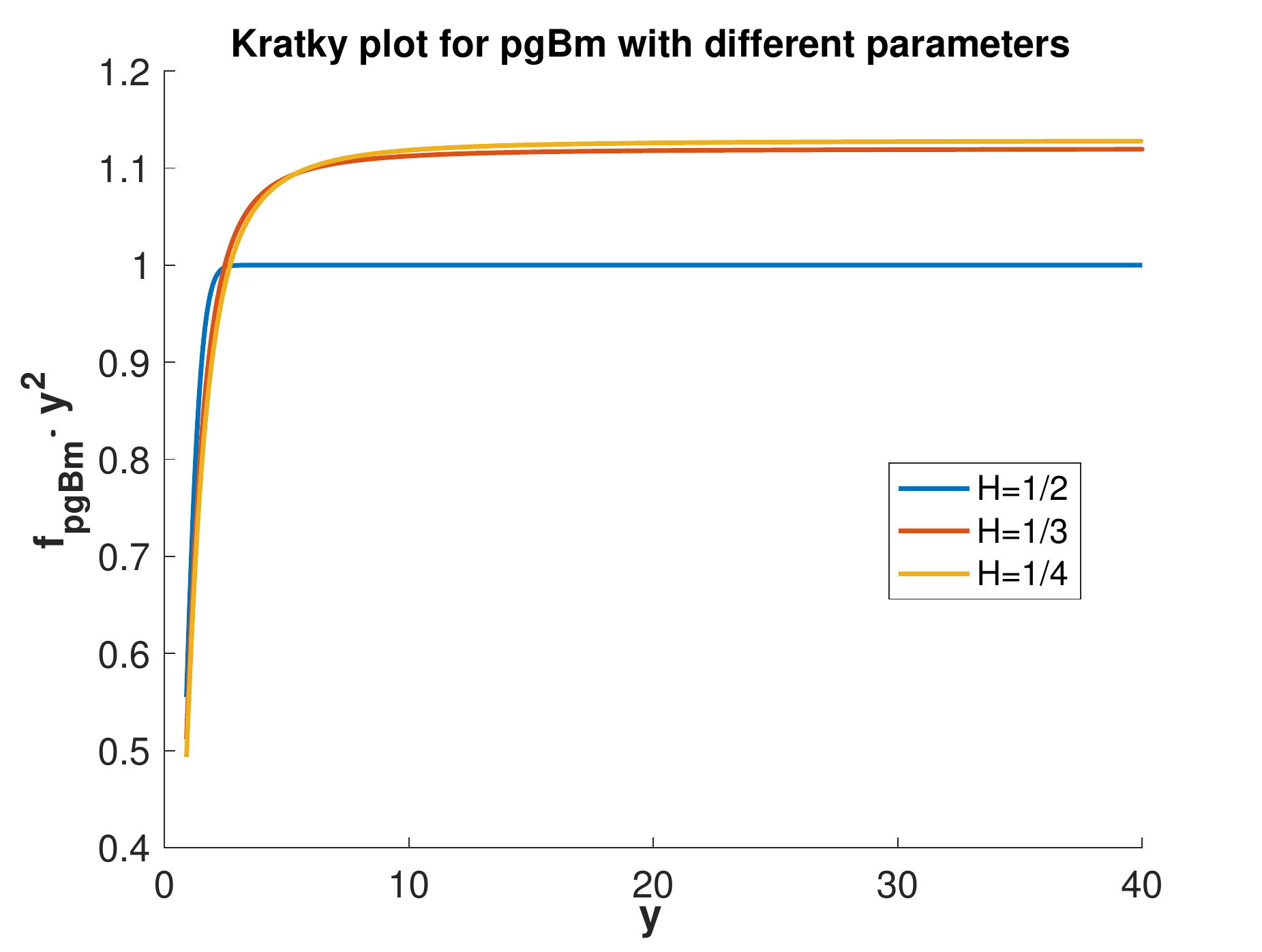}
\caption{\label{fig:Kratky_gbm}Kratky plot for the pgBm process
$X_{p}^{H}$ for $H=\frac{1}{2},\frac{1}{3},\frac{1}{4},\frac{1}{5}$.}
\end{centering}
\end{figure}
We have the following relation between the end--to-halftime length and
the radius of gyration for pgBm 
\[
\frac{\left(R_{e}^{\mathrm{pgBm}}\left(\frac{L}{2}\right)\right)^{2}}{2(2H+1)}=\left(R_{g}^{\mathrm{pgBm}}(L)\right)^{2}.
\]

\subsection{Form Factors for Periodic Generalized Grey Brownian Motion}

In this subsection we compute the form factor for the general class
of pggBm process $X_{p}^{\beta,H}$ introduced in Subsection\ \ref{subsec:pggBm}.
It corresponds to a generalization of the results obtained above for
the pfBm and pgBm processes.

The form factor for pggBm is given, for any $k\in\mathbb{R}$, by
\begin{eqnarray*}
S^{\mathrm{pggBm}}(k): & =\frac{2}{L^{2}}\int_{0}^{L}\int_{0}^{t}\mathbb{E}\left(e^{ik(X_{p}^{\beta,H}(t)-X_{p}^{\beta,H}(s))}\right)dsdt\\
 & =\frac{2}{L^{2}}\int_{0}^{L}\int_{0}^{t}\int_{0}^{\infty}M_{\beta}(\tau)\mathbb{E}\left(e^{-\frac{1}{2}k^{2}\sqrt{\tau}(B_{p}^{H}(t)-B_{p}^{H}(s))}\right)\,d\tau\,ds\,dt.\\
 & =\frac{2}{L^{2}}\int_{0}^{L}\int_{0}^{t}\int_{0}^{\infty}M_{\beta}(\tau)e^{-\frac{1}{2}k^{2}\tau d_{H}(t-s;L)}\,d\tau\,ds\,dt,
\end{eqnarray*}
Applying the Fubini theorem and making the change of variables $r=t-s$,
yields 
\begin{eqnarray*}
S^{\mathrm{pggBm}}(k) & =\frac{2}{L^{2}}\int_{0}^{\infty}M_{\beta}(\tau)\int_{0}^{L}\int_{0}^{t}e^{-\frac{1}{2}k^{2}\tau d_{H}(t-s;L)}\,d\tau\,ds\,dt\\
 & =\frac{2}{L^{2}}\int_{0}^{\infty}M_{\beta}(\tau)\int_{0}^{L}(L-r)e^{-\frac{1}{2}k^{2}\tau d_{H}(r;L)}\,dr\,d\tau.
\end{eqnarray*}
Once more Fubini's
theorem yields 
\begin{eqnarray*}
S^{\mathrm{pggBm}}(k) & =\frac{2}{L^{2}}\int_{0}^{L}(L-r)\int_{0}^{\infty}M_{\beta}(\tau)e^{-\frac{1}{2}k^{2}\tau d_{H}(r;L)}\,d\tau\,dr\\
 & =\frac{2}{L^{2}}\int_{0}^{L}(L-r)E_{\beta}\left(-\frac{k^{2}}{2}d_{H}(r;L)\right)\,dr\\
 & =\int_{0}^{1}E_{\beta}\left(-\frac{k^{2}}{2}\left(\frac{L}{2}\right)^{2H}r^{2H}\right)\,dr.
\end{eqnarray*}
Using the equality (\ref{eq:Euler_transform-MLf}) with $\rho=\sigma=\alpha=1$
and $\gamma=2H$ we obtain 
\begin{eqnarray*}
S^{\mathrm{pggBm}}(k) & =\,_{2}\Psi_{2}\left(\begin{array}{cc}
(1,2H), & (1,1)\\
(1,\beta), & (2,2H)
\end{array}\bigg|-\frac{k^{2}}{2}\left(\frac{L}{2}\right)^{2H}\right)\\
 & =\sum_{n=0}^{\infty}\frac{1}{(1+2Hn)\Gamma(1+\beta n)}\left(-\frac{k^{2}}{2}\left(\frac{L}{2}\right)^{2H}\right)^{n}.
\end{eqnarray*}
The Debye function for $X_{p}^{\beta,H}$ is obtained, denoting $y^{2}=\frac{k^{2}}{2}\left(\frac{L}{2}\right)^{2H}$,
as 
\[
f^{\mathrm{pggBm}}(y;\beta,H)=\sum_{n=0}^{\infty}\frac{\big(-y^{2}\big)^{n}}{(1+2Hn)\Gamma(1+\beta n)}.
\]
In Figure\ \ref{fig:Debye-pggBm} we plot the (truncated at $n=700$)
Debye function of $X_{p}^{\beta,H}$ for $\beta=\frac{1}{2}$ and
$H=\frac{1}{2},\frac{1}{3},\frac{1}{5},\frac{1}{7}$ in linear scale
Figure\ \ref{fig:Debye-pggBm}-\subref{fig:pggBm-Linear-scale} and
LogLog scale Figure\ \ref{fig:Debye-pggBm}-\subref{fig:pggBm-LogLog-scale}.

The radius of gyration for pggBm is obtained by expanding the form
factor to lower order 
\[
\big(R_{g}^{\beta,H}(L)\big)^{2}=\frac{L^{2H}}{2^{2H+1}(2H+1)\Gamma(\beta+1)}
\]
and the end-to-halftime length with time parameter $t\in[0,\frac{L}{2}]$
may be computed using (\ref{eq:2nd-moment-inc-pggBm}). We obtain
\[
\left(R_{e}^{\beta,H}\left(\frac{L}{2}\right)\right)^{2}=\mathbb{E}\left(\left(X_{p}^{\beta,H}\left(\frac{L}{2}\right)\right)^{2}\right)=\frac{d_{H}\left(\frac{L}{2};L\right)}{\Gamma(\beta+1)}=\frac{L^{2H}}{\Gamma(\beta+1)2^{2H}}.
\]
As a result, the following relation holds 
\[
\frac{\left(R_{e}^{pggBm}\left(\frac{L}{2}\right)\right)^{2}}{2(2H+1)}=\big(R_{g}^{\beta,H}(L)\big)^{2}.
\]

\begin{figure}
\begin{centering}
\subfloat[\label{fig:pggBm-Linear-scale}Linear scale.]{\begin{centering}
\includegraphics[scale=0.43]{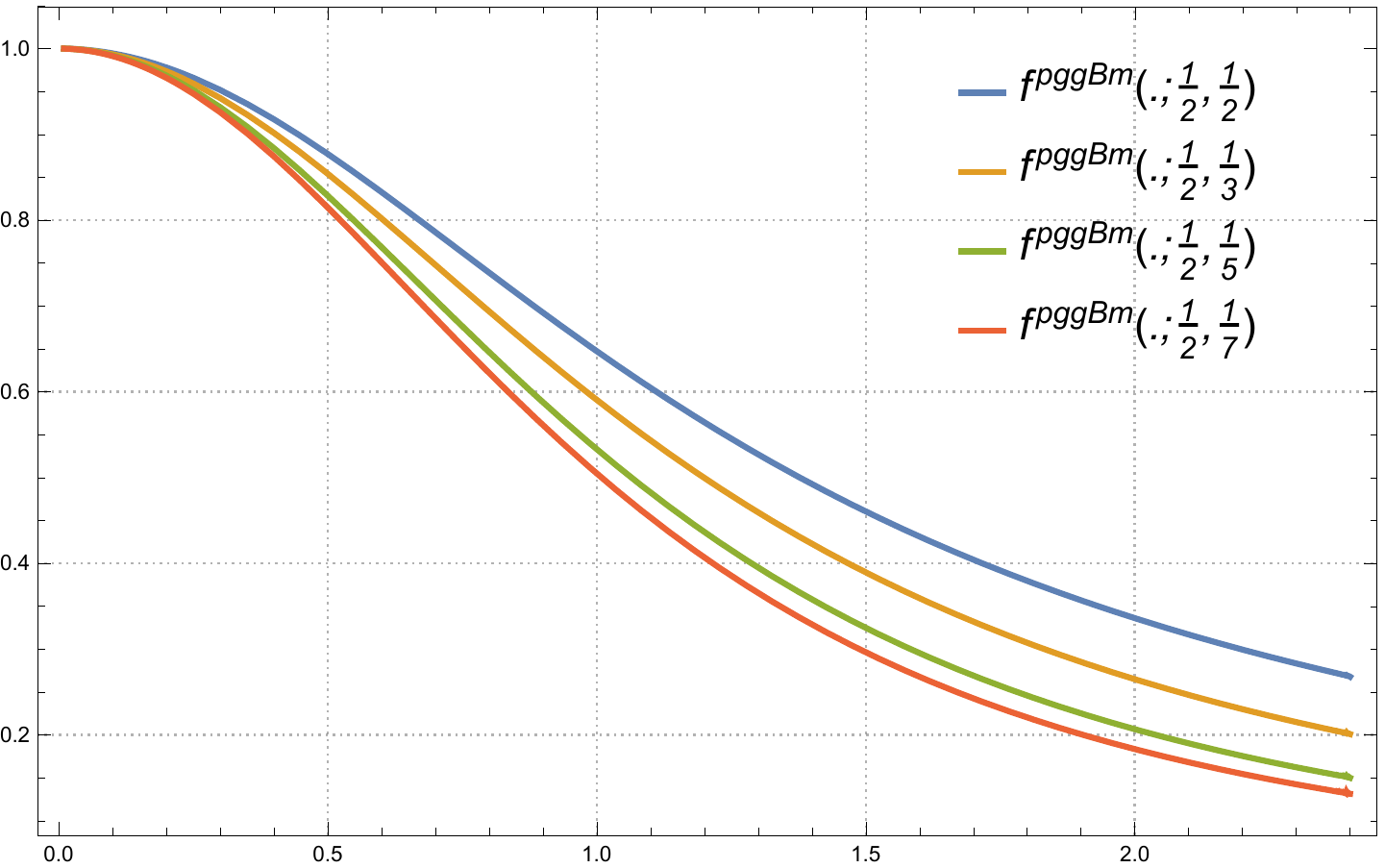}
\par\end{centering}
}\hfill{}\subfloat[\label{fig:pggBm-LogLog-scale}LogLog scale.]{\begin{centering}
\includegraphics[scale=0.31]{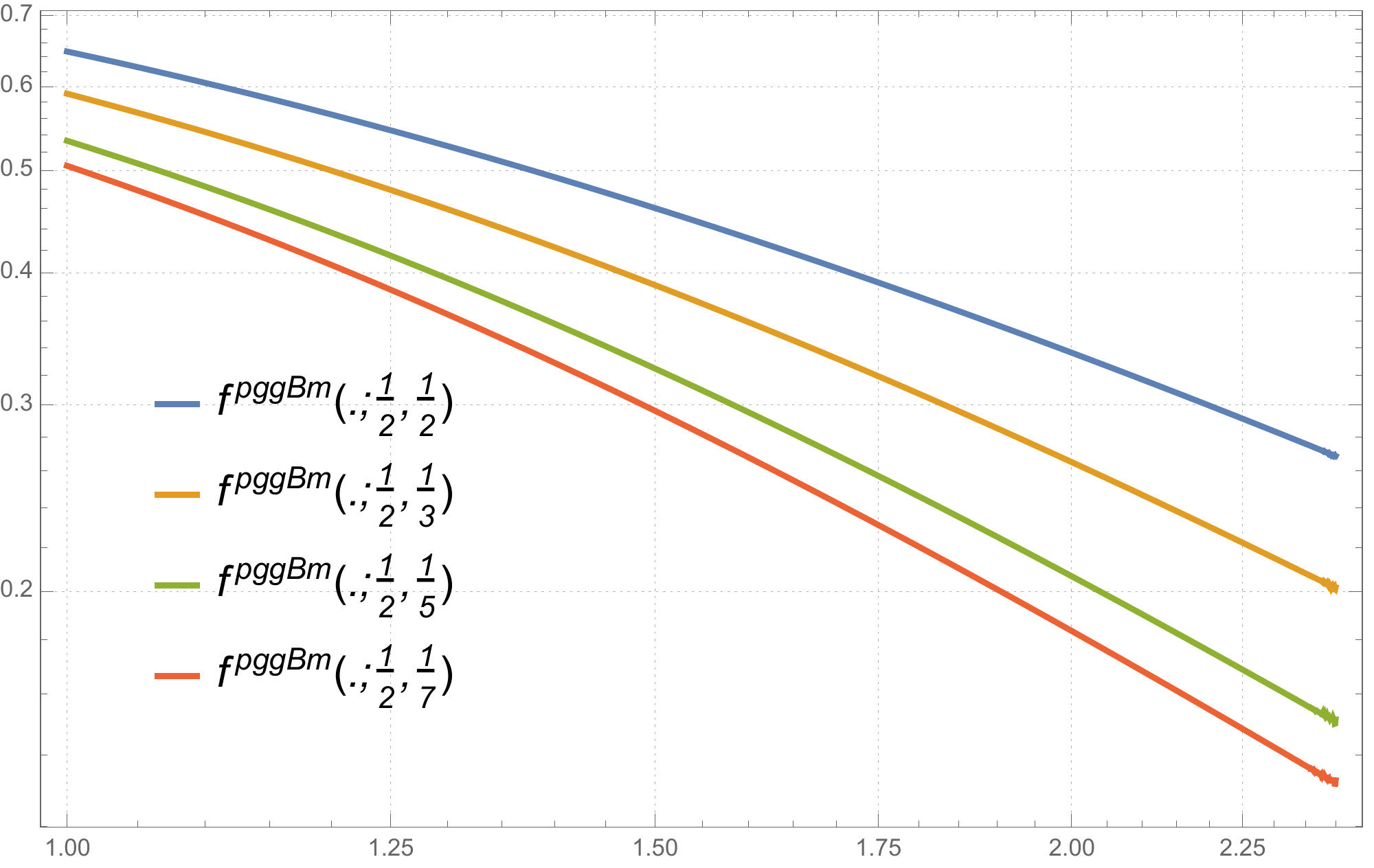}
\par\end{centering}
}
\par\end{centering}
\caption{\label{fig:Debye-pggBm}Debye functions for the pggBm process $X_{p}^{\beta,H}$
for $\beta=\frac{1}{2}$ and $H=\frac{1}{2},\frac{1}{3},\frac{1}{5},\frac{1}{7}$.}
\end{figure}

\appendix

\section{The Mittag-Leffler and $M$-Wright functions}

\label{sec:appendix}In this appendix we introduce two families of
functions which are used in this paper. They are the family of generalized
Mittag-Leffler functions $E_{\beta,\alpha}$, $0<\beta\le1$, $\alpha\in\mathbb{C}$
and the family of $M$-Wright functions $M_{\beta}$, $0<\beta\le1$
which is a special case of the Wright functions $W_{\lambda,\rho}$,
$\lambda>-1,\;\mu\in\mathbb{C}$. More details and these classes of
functions may found in \cite{GKMS2014} and references therein.

The Mittag-Leffler function was introduced by G.\ Mittag-Leffler
in a series of papers \cite{Mittag-Leffler1903,Mittag-Leffler1904,Mittag-Leffler1905}.
\begin{defn}[Mittag-Leffler function]
\label{def:MLf}
\begin{enumerate}
\item For $\beta>0$ the Mittag-Leffler function $E_{\beta}$ is defined
as an entire function by the following series representation 
\begin{equation}
E_{\beta}(z):=\sum_{n=0}^{\infty}\frac{z^{n}}{\Gamma(\beta n+1)},\quad z\in\mathbb{C},\label{eq:MLf}
\end{equation}
where $\Gamma$ denotes the gamma function.
\item For any $\rho\in\mathbb{C}$ the generalized Mittag-Leffler function
is an entire function defined by its power series 
\begin{equation}
E_{\beta,\rho}(z):=\sum_{n=0}^{\infty}\frac{z^{n}}{\Gamma(\beta n+\rho)},\quad z\in\mathbb{C}.\label{eq:gMLf}
\end{equation}
Note the relation $E_{\beta,1}(z)=E_{\beta}(z)$ and $E_{1}(z)=e^{z}$
for any $z\in\mathbb{C}$.
\end{enumerate}
\end{defn}

We have the following asymptotic for the generalized Mittag-Leffler
function $E_{\beta,\alpha}$.
\begin{prop}[{{{cf.\ \protect\cite[Section~4.7]{GKMS2014}}}}]
\label{prop:asymp-gMLf}Let $0<\beta<2$, $\alpha\in\emph{C}$ and
$\delta$ be such that 
\[
\frac{\beta\pi}{2}<\delta<\min\{\pi,\beta\pi\}.
\]
Then, for any $m\in\emph{N}$, the following asymptotic formulas hold:
\begin{enumerate}
\item If $|\arg(z)|\leq\delta$, then 
\[
E_{\beta,\alpha}(z)=\frac{1}{\beta}z^{(1-\alpha)/\beta}\exp(z^{1/\beta})-\sum_{n=1}^{m}\frac{z^{-n}}{\Gamma(\alpha-\beta n)}+O(|z|^{-m-1}),\;|z|\rightarrow\infty.
\]
\item \label{asym-gMLf-two}If $\delta\le|\arg(z)|\leq\pi$, then 
\begin{equation}
E_{\beta,\alpha}(z)=-\sum_{n=1}^{m}\frac{z^{-n}}{\Gamma(\alpha-\beta n)}+O(|z|^{-m-1}),\quad|z|\rightarrow\infty.\label{eq:asymptotic_MLf}
\end{equation}
\end{enumerate}
\end{prop}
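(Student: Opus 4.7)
The plan is to derive an integral representation of $E_{\beta,\alpha}$ via the Hankel contour formula for $1/\Gamma$, and then deform the contour depending on the sector in which $z$ lies, extracting a residue contribution where appropriate. First I would apply the classical Hankel formula $1/\Gamma(s)=\frac{1}{2\pi i}\int_{\mathcal{H}}e^{t}t^{-s}\,dt$, with $\mathcal{H}$ a Hankel contour encircling the negative real axis, to $s=\beta n+\alpha$, substitute into the series (\ref{eq:gMLf}), and interchange sum and integral on a contour whose small loop has radius exceeding $|z|^{1/\beta}$. Summing the resulting geometric series yields
\[
E_{\beta,\alpha}(z) \;=\; \frac{1}{2\pi i}\int_{\mathcal{H}} \frac{e^{t}\,t^{\beta-\alpha}}{t^{\beta}-z}\,dt,
\]
a formula which persists for all $z$ off the branch cut by analytic continuation.

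Next I would analyze the singularities of the integrand. With the principal branch of $t^{1/\beta}$, the zeros of $t^{\beta}-z$ are $t_{k}=z^{1/\beta}\exp(2\pi ik/\beta)$; the hypothesis $\beta<2$ together with the constraint $\frac{\beta\pi}{2}<\delta<\min\{\pi,\beta\pi\}$ guarantees that among these only the principal root $t_{0}=z^{1/\beta}$ can be swept during the intended deformation, and that it is swept precisely when $|\arg z|<\delta$. For case (i), I would contract $\mathcal{H}$ onto a smaller keyhole contour $\mathcal{H}'$ with rays at angles $\pm\delta/\beta$; the deformation picks up the residue at $t_{0}$, which by an elementary computation equals $\frac{1}{\beta}z^{(1-\alpha)/\beta}\exp(z^{1/\beta})$. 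In case (ii), no pole is crossed and only the integral over $\mathcal{H}'$ remains.

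To extract the algebraic tail along $\mathcal{H}'$ in either case, I would use the algebraic identity
\[
\frac{1}{t^{\beta}-z} \;=\; -\sum_{n=1}^{m}\frac{t^{\beta(n-1)}}{z^{n}}+\frac{t^{\beta m}}{z^{m}(t^{\beta}-z)},
\]
insert it into the integrand, and identify each term of the resulting finite sum via the Hankel formula once more,
\[
\frac{1}{2\pi i}\int_{\mathcal{H}'}e^{t}\,t^{\beta n-\alpha}\,dt \;=\; \frac{1}{\Gamma(\alpha-\beta n)},
\]
so that the first $m$ terms contribute precisely $-\sum_{n=1}^{m}z^{-n}/\Gamma(\alpha-\beta n)$. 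The remainder is bounded by combining the uniform lower bound $|t^{\beta}-z|\geq c|z|$ on $\mathcal{H}'$ (valid since $\mathcal{H}'$ remains bounded while $|z|\to\infty$) with the finiteness of $\int_{\mathcal{H}'}e^{\mathrm{Re}\,t}|t|^{\beta m+\beta-\mathrm{Re}\,\alpha}\,|dt|$, giving the claimed $O(|z|^{-m-1})$.

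The main obstacle, and the reason for the specific range on $\delta$, is the geometric bookkeeping for the deformation: the rays of $\mathcal{H}'$ must exit to infinity at angles $\pm\delta/\beta$ so that the residue at $z^{1/\beta}$ is collected exactly when $|\arg z|<\delta$, and these angles must simultaneously exceed $\pi/2$ in absolute value so that $\mathrm{Re}\,t\to-\infty$ along the rays, forcing $\delta>\frac{\beta\pi}{2}$. The upper bound $\delta<\min\{\pi,\beta\pi\}$ is needed to keep the rays within the principal branch ($\delta/\beta\leq\pi$) and to exclude the secondary roots $t_{k}$, $k\neq 0$, from the swept region. Once these geometric conditions are in place, combining the residue (when present) with the term-by-term expansion of the Hankel integral yields the two asserted asymptotic expansions.
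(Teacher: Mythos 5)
The paper contains no proof of this proposition: it is imported by citation from \cite[Section~4.7]{GKMS2014}, so there is no in-paper argument to compare against. What you have written is a reconstruction of the classical proof used in that reference (and in Podlubny and Erd\'elyi): the Hankel representation of $1/\Gamma$, termwise summation of the geometric series to obtain $E_{\beta,\alpha}(z)=\frac{1}{2\pi i}\int_{\mathcal{H}}e^{t}t^{\beta-\alpha}(t^{\beta}-z)^{-1}\,dt$, deformation of the contour with collection of the residue $\frac{1}{\beta}z^{(1-\alpha)/\beta}\exp(z^{1/\beta})$ precisely when the principal root of $t^{\beta}=z$ is swept, and the finite expansion of $(t^{\beta}-z)^{-1}$ re-identified term by term through the Hankel formula. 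The residue computation, the algebraic identity, the identification of the first $m$ terms, and your accounting for each of the inequalities on $\delta$ are all correct.

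The one step that does not hold as written is the remainder estimate. First, $\mathcal{H}'$ is not bounded --- its two rays run to infinity, and indeed you need that unboundedness (together with $\delta/\beta>\pi/2$, so that $\operatorname{Re}t\to-\infty$ along them) for the integral $\int_{\mathcal{H}'}e^{\operatorname{Re}t}|t|^{\beta m+\beta-\operatorname{Re}\alpha}\,|dt|$ to converge --- so the parenthetical justification of $|t^{\beta}-z|\ge c|z|$ is wrong. More seriously, that uniform lower bound fails on a contour whose rays sit at argument exactly $\pm\delta/\beta$: for $\arg z=\pm\delta$, which lies in \emph{both} closed sectors of the statement, the point $z^{1/\beta}$ is on $\mathcal{H}'$ and the denominator vanishes on the contour, so no constant $c>0$ exists uniformly over either sector. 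The standard repair is to use two slightly different contours: for case (i) take the ray angle $\theta/\beta$ with $\theta\in(\delta,\min\{\pi,\beta\pi\})$, so the pole is swept for all $|\arg z|\le\delta$ and $|t^{\beta}-z|\gtrsim|z|\sin(\theta-\delta)$ uniformly; for case (ii) take the ray angle $\theta'/\beta$ with $\theta'\in(\beta\pi/2,\delta)$, so the pole is never swept for $\delta\le|\arg z|\le\pi$ and the same kind of lower bound holds. Both intervals are nonempty exactly because of the hypothesis $\beta\pi/2<\delta<\min\{\pi,\beta\pi\}$. With that adjustment your argument closes and yields the claimed $O(|z|^{-m-1})$, uniformly on each closed sector.
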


The Euler integral transform of the MLf may be used to compute the
following integral, with real parts $\mathrm{Re}(\alpha),\mathrm{Re}(\beta),\mathrm{Re}(\sigma)>0$,
$\rho\in\mathbb{C}$ and $\gamma>0$, cf.\ \cite[eq.~(2.2.13)]{Mathai2008}
\begin{equation}
\int_{0}^{1}t^{\rho-1}(1-t)^{\sigma-1}E_{\beta,\alpha}(xt^{\gamma})\,dt=\Gamma(\sigma){}_{2}\Psi_{2}\left(\begin{array}{cc}
(\rho,\gamma), & (1,1)\\
(\alpha,\beta), & (\sigma+\rho,\gamma)
\end{array}\bigg|x\right),\label{eq:Euler_transform-MLf}
\end{equation}
where $_{2}\Psi_{2}$ is the Fox-Wright function (also called generalized
Wright function \cite{Kilbas2002}, \cite[Appendix~F, eq.~(F.2.14)]{GKMS2014}
and \cite{MP07}) given for $x,a_{i},c_{i}\in\mathbb{C}$ and $b_{i},d_{i}\in\mathbb{R}$
by 
\[
_{2}\Psi_{2}\left(\begin{array}{cc}
(a_{1},b_{1}), & (a_{2},b_{2})\\
(c_{1},d_{1}), & (c_{2},d_{2})
\end{array}\bigg|x\right)=\sum_{n=0}^{\infty}\frac{\Gamma(a_{1}+b_{1}n)\Gamma(a_{2}+b_{2}n)}{\Gamma(c_{1}+d_{1}n)\Gamma(c_{2}+d_{2}n)}\frac{x^{n}}{n!}.
\]
In particular, when $\rho=\alpha$ and $\gamma=\beta$, eq.\ (\ref{eq:Euler_transform-MLf})
simplifies to 
\begin{equation}
\int_{0}^{1}t^{\alpha-1}(1-t)^{\sigma-1}E_{\beta,\alpha}(xt^{\beta})\,dt=\Gamma(\sigma)E_{\beta,\alpha+\sigma}(x).\label{eq:Euler_transform-MLf1}
\end{equation}
Both integrals (\ref{eq:Euler_transform-MLf}) and (\ref{eq:Euler_transform-MLf1})
appears in computing the form factors in Section~\ref{sec:Form_Factors}.

The Wright function is defined by the following series representation
which converges in the whole complex $z$-plane 
\[
W_{\lambda,\mu}(z):=\sum_{n=0}^{\infty}\frac{z^{n}}{n!\Gamma(\lambda n+\mu)},\quad\lambda>-1,\;\mu\in\mathbb{C}.
\]
An important particular case of the Wright function is the so called
$M$-Wright function $M_{\beta}$, $0<\beta\le1$ (in one variable)
defined by 
\begin{equation}
M_{\beta}(z):=W_{-\beta,1-\beta}(-z)=\sum_{n=0}^{\infty}\frac{(-z)^{n}}{n!\Gamma(-\beta n+1-\beta)}.\label{eq:M-Wright-function}
\end{equation}
For the choice $\beta=\frac{1}{2}$ the corresponding $M$-Wright
function reduces to the Gaussian density 
\begin{equation}
M_{\frac{1}{2}}(z)=\frac{1}{\sqrt{\pi}}\exp\left(-\frac{z^{2}}{4}\right).\label{eq:MWright_Gaussian}
\end{equation}

The MLf $E_{\beta}$ and the $M$-Wright are related through the Laplace
transform 
\begin{equation}
\int_{0}^{\infty}e^{-s\tau}M_{\beta}(\tau)\,d\tau=E_{\beta}(-s).\label{eq:LaplaceT_MWf}
\end{equation}

\subsection*{Acknowledgement}

Financial support from FCT -- Funda{\c c\~ao} para a Ci{\^e}ncia
e a Tecnologia through the project UID/MAT/04674/2019 (CIMA Universidade
da Madeira) is gratefully acknowledged.

\vspace{.5cm}
\bibliographystyle{ieeetr}

\begin{thebibliography}{99}
\bibitem{Istas2005}
J.~Istas and D.~Stefanovits.
\newblock {Spherical and hyperbolic fractional Brownian motion}.
\newblock {\em Elec.\ Comm.\ Prob}, 10(254-262):266, 2005.

\bibitem{Levy48}
P.~L{\'e}vy.
\newblock {\em Processus {S}tochastiques et {M}ouvement {B}rownien. {S}uivi
  d'une note de {M}. {L}o\`eve}.
\newblock Gauthier-Villars, Paris, 1948.

\bibitem{BBS19}
W.~Bock, J.~Bornales and L.~Streit.
\newblock {\em Dynamical Properties of Gaussian Chains and Rings with Long Range Interactions},
\newblock submitted for publication.


\bibitem{Schneider90}
W.~R. Schneider.
\newblock Grey noise.
\newblock In S.~Albeverio, G.~Casati, U.~Cattaneo, D.~Merlini, and R.~Moresi,
  editors, {\em {Stochastic Processes, Physics and Geometry}}, pages 676--681.
  World Scientific Publishing, Teaneck, NJ, 1990.

\bibitem{MR1190506}
W.~R. Schneider.
\newblock Grey noise.
\newblock In S.~Albeverio, J.~E. Fenstad, H.~Holden, and T.~Lindstr{\o}m,
  editors, {\em {Ideas and Methods in Mathematical Analysis, Stochastics, and
  Applications} ({O}slo, 1988)}, pages 261--282. Cambridge Univ.\ Press,
  Cambridge, 1992.


\bibitem{Mura_mainardi_09}
A.~Mura and F.~Mainardi.
\newblock A class of self-similar stochastic processes with stationary
  increments to model anomalous diffusion in physics.
\newblock {\em Integr.\ Transf.\ Spec.\ F.}, 20(3-4):185--198, 2009.


\bibitem{GJRS14}
M.~Grothaus, F.~Jahnert, F.~Riemann, and J.~L. Silva.
\newblock {Mittag-Leffler Analysis I: Construction and characterization}.
\newblock {\em J.\ Funct.\ Anal.}, 268(7):1876--1903, April 2015.

\bibitem{Hammouda} 
B. Hammouda.
\newblock {Probing Nanoscale Structures - The SANS Toolbox},
\newblock http://www.ncnr.nist.gov/staff/hammouda/.

\bibitem{GKMS2014}
R.~Gorenflo, A.~A. Kilbas, F.~Mainardi, and S.~V. Rogosin.
\newblock {\em {Mittag-Leffler Functions, Related Topics and Applications}}.
\newblock Springer, 2014.

\bibitem{GR81}
I.~Gradstein and I.~Ryshik.
\newblock {\em Tables of Series, Products and Integrals}, volume~1.
\newblock Verlag Harri Deutsch Thun, Frankfurt/M, 1981.

\bibitem{Kilbas2002}
A.~A. Kilbas, M.~Saigo, and J.~J. Trujillo.
\newblock {On the generalized Wright function}.
\newblock {\em Fract.\ Calc.\ Appl.\ Anal.}, 5(4):437--460, 2002.

\bibitem{Mathai2008}
A.~M. Mathai and H.~J. Haubold.
\newblock {\em {Special Functions for Applied Scientists}}.
\newblock Springer, 2008.

\bibitem{Mittag-Leffler1903}
G.~M. Mittag-Leffler.
\newblock Sur la nouvelle fonction $e_\alpha (x)$.
\newblock {\em CR Acad.\ Sci.\ Paris}, 137(2):554--558, 1903.

\bibitem{Mittag-Leffler1904}
G.~M. Mittag-Leffler.
\newblock Sopra la funzione $e_\alpha(x)$.
\newblock {\em Rend.\ Accad.\ Lincei}, 5(13):3--5, 1904.

\bibitem{Mittag-Leffler1905}
G.~M. Mittag-Leffler.
\newblock Sur la repr{\'e}sentation analytique d'une branche uniforme d'une
  fonction monog{\`e}ne.
\newblock {\em Acta Math.}, 29(1):101--181, 1905.

\bibitem{MP07}
F.~Mainardi and G.~Pagnini.
\newblock {The role of the Fox--Wright functions in fractional sub-diffusion of
  distributed order}.
\newblock {\em J.\ Comput.\ Anal.\ Appl.}, 207:245--257, 2007.

\end{thebibliography}

\end{document}